    \newcommand{\CASE}[1]{\STATE \textbf{case} #1\textbf{:} \begin{ALC@g}}
    \newcommand{\ENDCASE}{\end{ALC@g}}
    \newcommand{\DEFAULT}{\STATE \textbf{default:} \begin{ALC@g}}
    \newcommand{\ENDDEFAULT}{\end{ALC@g}}
    \newcommand{\DEFAULTLINE}[1]{\STATE \textbf{default:} }
\newcolumntype{C}[1]{>{\centering\let\newline\\\arraybackslash\hspace{0pt}}m{#1}}
\pgfplotsset{compat=newest}
\newtheorem{proposition}{Proposition}
\begin{document}

\title{Multi-Version Coding}

\author{Majid~Khabbazian
  \thanks{%
    M. Khabbazian is with the Department of Electrical and Computer Engineering, University of Alberta, Edmonton, Canada
    (Email: {mkhabbazian@ualberta.ca}).}}

\maketitle

\begin{abstract}
  We derive a simple lower bound for the multi-version coding problem formulated in~\cite{WangC14a}.
  We also propose simple algorithms that almost match the lower bound derived.
  Another lower bound is proven for an extended version of the multi-version coding problem introduced in~\cite{WangC14b}.
\end{abstract}

%\begin{IEEEkeywords}
%IDNC, Broadcast
%\end{IEEEkeywords}

\section{Introduction}
  We study the multi-version coding problem formulated by Wang and Cadambe~\cite{WangC14a}.
  In this problem, there is a distributed storage system with $n$ servers, 
  and a client with $v$ independent message versions.
  The informal description of the problem is as follows.
  Every time, the client uploads one version (starting with version 1) by connecting to
  these $n$ servers.
  Because of network failures, a version may not reach all the servers.
  However, when a version is reached/received by a server, the server stores some information 
  about that message version (not necessarily the whole message), and perhaps modifies the information already stored.
  For example, in the \emph{replication strategy}, when a version reaches a sever, the server stores the whole version 
  and deletes any version stored before.
  
  Let $c$, $1\leq c\leq n$ be an integer.
  The multi-version coding problem requires that the client should be able to download a version $i$, $1\leq i\leq v$, 
  by connecting to any set of $c$ servers $\mathcal{S}$, if version $i$ is the latest version reached by all the servers in $\mathcal{S}$.
  The objective of the problem is to minimize the worst-case storage cost per server, defined as the size of server's storage
  divided by the size of message (assuming that all versions have the same size).
  
  By the above definition, the storage cost of the simple replication strategy is one.
  When $c<v$, a better strategy, as stated in~\cite{WangC14a}, is to use an $(n,c)$ MDS code for each version.
  Using this approach, the worst-case storage cost is $\frac{v}{c}$.
  Interestingly, it was shown that the cost of $\frac{v}{c}$ can be slightly reduced for $v=2$, and $v=3$, to
  $\frac{2c-1}{c^2}$, and $\frac{3c-2}{c^2}$, respectively~\cite{WangC14a}.
  The authors of~\cite{WangC14a} also proved a lower bound of $1-(1-\frac{1}{c})^v$ for the worst-case storage cost, hence concluded that
  when the number of versions $v$  approaches infinity, the \emph{replication strategy} is close to optimal.
  Their lower bound also indicates that for small values of $v$, MDS codes are almost optimal.
  
  In this work, we prove a new lower bound on the worst-case storage cost.
  Our lower bound shows that when $v> c$, the \emph{replication strategy} is optimal. 
  %Also, it shows that the MDS codes are almost optimal when $v\leq c$.
  We propose two algorithms based on erasure codes that can achieve near optimal storage cost for any $v\leq c$.
  This answers an open question raised in~\cite{WangC14a} on designing codes for moderate values of $v$.

\section{Lower Bound}

\begin{proposition}
  The worst-case storage cost of the multi-coding problem is lower bounded by $\min(1, \frac{v}{c+1})$.\\
  Note that $\frac{v}{c+1}\approx \frac{vc-(v-1)}{c^2}$.
\end{proposition}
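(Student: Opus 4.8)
The plan is to build a single worst-case arrival schedule, confined to a fixed pool of $c+1$ servers, in which $k:=\min(v,c+1)$ distinct versions are each forced to be decodable from a $c$-element subset of the pool; a short entropy argument — subadditivity of entropy, the fact that a decodable version is a deterministic function of the stored contents used to decode it, and independence of the versions — then shows that the total storage on the pool is at least $k$ message lengths, and dividing by $c+1$ gives $\tfrac{k}{c+1}=\min(1,\tfrac{v}{c+1})$.

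In more detail, I would first observe that it suffices to prove a lower bound of $\tfrac{k}{c+1}$ with $k=\min(v,c+1)$, since this quantity equals $\min(1,\tfrac{v}{c+1})$. Assuming $n\ge c+1$, fix any $c+1$ servers, label them $1,\dots,c+1$, use only the first $k$ versions, and let version $i$ (for $i=1,\dots,k$) reach every pooled server \emph{except} server $i$; versions $k+1,\dots,v$, and all servers outside the pool, receive nothing. The key combinatorial check is that for each $i\le k$ the $c$-server set $\mathcal S_i:=\{1,\dots,c+1\}\setminus\{i\}$ has exactly one version common to all of its members, namely version $i$: every other version $m\le k$ is missing from server $m\in\mathcal S_i$, no version larger than $k$ is stored anywhere, and version $i$ does reach all of $\mathcal S_i$. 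Hence version $i$ is the latest version common to the servers of $\mathcal S_i$, so the scheme must let the client recover message $X_i$ from the contents stored at $\mathcal S_i$.

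The entropy step is then immediate. Writing $W_j$ for what server $j$ stores at the end of this run and $h$ for the (common) size of a message, and using that the $v$ versions are independent, the fact that each $X_i$ with $i\le k$ is a function of $\{W_j:j\in\mathcal S_i\}\subseteq\{W_1,\dots,W_{c+1}\}$ gives
\[
\sum_{j=1}^{c+1}H(W_j)\ \ge\ H(W_1,\dots,W_{c+1})\ \ge\ H(X_1,\dots,X_k)\ =\ kh.
\]
So under this run some server stores at least $\tfrac{k}{c+1}h$ bits, i.e.\ the worst-case per-server storage cost is at least $\tfrac{k}{c+1}=\min(1,\tfrac{v}{c+1})$.

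I expect the only genuine difficulty to be \emph{finding} this schedule. The obvious ``staircase'' (version $i$ sent to servers $i,i+1,\dots,c+1$) piles all earlier versions onto every server, so a version is the latest common version of some $c$-subset only for $i\le 2$ and the argument stalls at a bound of $\tfrac{2}{c+1}$; what makes the argument work is the ``version $i$ missed by exactly server $i$'' pattern, which turns each of the first $k$ versions into the \emph{unique} latest-common version of its own $c$-subset. A minor loose end is the degenerate regime $c=n$, where the pool has no spare server to exclude; there one has to argue about configurations reached at different times along one run (e.g.\ branching runs that share the states of all but one server), which recovers the bound at least for small $v$, and I would handle this, or simply restrict to the interesting case $c<n$, in a short separate remark.
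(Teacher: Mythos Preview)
Your proposal is correct and follows essentially the same approach as the paper: both use the ``version $i$ misses exactly server $i$'' schedule on a pool of $c+1$ servers so that each $c$-subset $\mathcal S_i$ has version $i$ as its unique common (hence latest common) version, forcing all $k=\min(v,c+1)$ messages to be recoverable from the pool. Your write-up is somewhat more careful than the paper's --- you handle $v\ge c+1$ explicitly via $k$, formalize the ``the pool must contain all $k$ messages'' step with a clean entropy inequality, and flag the $c=n$ edge case --- but the underlying construction and counting are identical.
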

\begin{proof}
  Suppose $v\leq c$, and $n=c+1$.
  Assume that server $i$, $v+1\leq i\leq c+1$ were reached by  all the $v$ versions.
  Also, assume that server $i$, $1\leq i\leq v$, were reached by all the $v$ versions except version $i$.
  Let $S_i$, $1\leq i \leq v$, be the subset of servers including all servers except $i$.
  Note that, for every $1\leq i\leq v$, $|S_i|=c$, and the latest version reached by all server in $S_i$ is $i$.
  Therefore, we must be able to retrieve version $i$, $1\leq i \leq v$, by connecting to $S_i$.
  This implies that the set of all $c+1$ servers must contain information about all $v$ versions. 
  Hence, the storage cost per server must be at least $\frac{v}{c+1}$, in this setting.
  Note that, by partitioning the set of servers to parts of size $c+1$, this argument is easily generalized to the case where $c+1|n$ .
\end{proof}

\section{Simple Near-Optimal Multi-Version Coding Algorithms}
   Following we informally describe two multi-version coding algorithms.
   The proposed algorithms assure that at each step of the process the storage cost per server does not exceed the maximum storage cost.
   Also the information stored for one version does not need to increase when other versions arrive. 
    
\subsection{First Algorithm}

  The first algorithm uses a $(n, c+1)$ MDS code  for versions $1\leq i \leq v-1$, and
  a $(n,c)$ MDS code for version $v$ (the last version).
  Suppose the size of each version is $B$ bits.
  Upon receiving a version $i$, $1\leq i\leq v-1$, a server stores $\frac{2B}{c+1}$ bits of coded information for that version,
  and reduces the information stored for version $i-1$ from $\frac{2B}{c+1}$  to $\frac{B}{c+1}$ (if version $i-1$ has received before).
  Every server that receives version $v$, that is the latest version, just stores $\frac{B}{c}$ bits of coded information for it.
  Now, first note that, in the worst case, the total storage cost of a server is $(v-1)\frac{B}{c+1}+\frac{B}{c}$, 
  which is less than $\frac{vc-(v-1)+1}{c^2}B$.
  Second, if version $i$, $1\leq i\leq v-1$ is the latest version reached by a set of $c$ servers, then the total information about version $i$ 
  stored in those servers is at least $(c-1)\frac{B}{c+1}+\frac{2B}{c+1}=B$, 
  where $\frac{2B}{c+1}$ is due to the fact that at least one of those servers has not been reached by version $i+1$.
  If version $v$ is the latest version reached by the servers, then the total information of version $v$ at the servers is clearly $c\cdot\frac{B}{c}=B$.
\subsection{Second Algorithm}
  The second algorithm slightly improves the storage cost of the first algorithm to $\frac{vc-(v-1)}{c^2}B$, which almost matches the lower bound proven.
  Here, we just explain how storage is assigned for each version on a server.
  Using coding we can easily guarantee that a version is retrievable from a set of servers 
  as long as the sum of storages assigned to that version by the set of servers is at least $B$ bits.
  
  In the second algorithm, upon receiving the first version, a server stores $\frac{vc-(v-1)}{c^2}B$ bits of information.
  When another version is received, the server deletes $\frac{B}{c}$ bits of information of the first version, and stores $\frac{B}{c}$ bits 
  of information of the version received. 
  Now consider a set $\mathcal{S}$ of $c$ servers. 
  If the latest version reached by all servers in $\mathcal{S}$ is $i>1$, then each server has $\frac{B}{c}$ bits of information of that version,
  so the latest version can be decoded.
  If the latest version is the first version, then the total information of the first version stored in all servers in $\mathcal{S}$ is at least
  \[
    c\cdot \frac{c-(v-1)}{c^2}B+ (v-1)\frac{B}{c}=B,
  \]
  where the term $ (v-1)\frac{B}{c}$ is due to the fact that versions $2,3,\ldots v$ are not the latest versions reached,
  hence the servers that miss those versions have deleted $\frac{B}{c}$ less bits of information from their first version for each missing version.
  
\section{Extended Multi-Coding problem}
  In the original multi-coding problem, the latest version reached by a set of $c$ servers should be decodable.
  This can be relaxed, as explained in~\cite{WangC14b}, by requiring the latest version or any later version to be decodable.
  In~\cite{WangC14b}, it was shown that the storage cost of the extended multi-coding problem is strictly less than that in the original problem.
  The following lower bound on the worst storage cost per server was proven in~\cite{WangC14b}:
  \[
    \text{storage cost }\geq
    \begin{cases}
      \frac{2}{c+1} & \text{if $c$ is odd,}\\
      \frac{2(c+1}{c(c+2)} & \text{if $c$ is even.}
    \end{cases}
  \]
  
  Note that the above lower bound does not depend on $v$.
  Here, we prove a lower bound that is an increasing function of $v$.
  In particular, we show that the storage cost of the extended multi-cast problem is lower bounded by $\frac{v}{c+v-1}$.
  Then, we show that the bound is tight when $c=vq+1$ for some non-negative integer $q$.
  
\begin{proposition}
  The worst-case storage cost for the extended multi-coding problem is at least $\frac{v}{c+v-1}$.
\end{proposition}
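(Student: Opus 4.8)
The plan is to mimic the adversarial argument behind Proposition~1, but with a configuration rich enough to defeat the extra freedom the extended problem grants the code, namely being allowed to return a \emph{later} version instead of the latest common one.

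First I would fix a failure pattern on a suitable number of servers. Take $c$ ``complete'' servers that have received all $v$ versions, together with several ``lagging'' servers, where a lagging server has received only an initial segment $V_1,\dots,V_j$ of the version sequence. I would include enough lagging servers of each level $j\in\{1,\dots,v-1\}$ so that some set $\mathcal{S}_j$ of exactly $c$ servers has latest common version $j$ while containing so few servers that still hold a version $>j$ that reconstructing such a later version from $\mathcal{S}_j$ is impossible under a per-server budget smaller than $\frac{v}{c+v-1}B$. Concretely, a set containing $m$ complete servers (and laggards that are stuck at version $\le j$) can reconstruct a version $>j$ only if those $m$ servers jointly store $B$ bits of it, which is impossible once $m<\frac{c+v-1}{v}$; so I choose the laggard multiplicities to force $m\le\big\lfloor\frac{c+v-1}{v}\big\rfloor$ in the relevant sets. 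This is where the equality case $c=vq+1$, which makes the threshold exactly $q+1$, comes out cleanly.

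Next, assuming for contradiction a code with worst-case cost $\alpha<\frac{v}{c+v-1}$, I would translate the retrieval requirements into inequalities on the amounts $x_s^{(k)}$ of version-$k$ information stored at server $s$, made rigorous via the usual cut-set / entropy argument. For each level $j$, since the later-version escape from $\mathcal{S}_j$ is blocked, the code is forced to reconstruct $V_j$ itself from $\mathcal{S}_j$; since $\mathcal{S}_j$ may be chosen in many ways, this becomes an MDS-type constraint saying that $V_j$ is recoverable from every small subset of the complete servers (augmented by the fixed contributions of the level-$j$ laggards), which lower bounds $\sum_s x_s^{(j)}$ over the complete servers in terms of $B$ minus the laggard storage. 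Combining these with the top-level constraint that $V_v$ is recoverable from the $c$ complete servers, and with the per-server budget $\sum_k x_s^{(k)}\le\alpha B$, then summing over versions and servers, yields $\alpha\ge\frac{v}{c+v-1}$, the desired contradiction. As in Proposition~1, partitioning a larger server set into blocks extends the bound to all admissible $n$.

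The delicate part, which I expect to demand the most care, is precisely the interaction with the ``return a later version'' option: one must tune the laggard structure so that escapes are blocked \emph{exactly} at the threshold $\frac{v}{c+v-1}$, and the final accounting must be carried out per version while respecting the single shared per-server budget --- the crude estimate ``the $v$ versions together need $vB$ bits spread over the servers'' is too weak here. In particular, for retrieval sets where the escape is \emph{not} blocked one still has to charge the code for the redundancy needed to make that escape work (an MDS code on the complete servers), so the bound really emerges from a case analysis over the number of laggards in the retrieval set rather than from a single clean cut.
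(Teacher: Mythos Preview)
Your plan diverges substantially from the paper's argument, and as written it carries a real gap.

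The paper does \emph{not} fix a configuration in advance. It builds profiles $p_1,p_2,\dots$ adaptively from the given code: start with $p_1=(1,\dots,1)$, and obtain $p_{i+1}$ from $p_i$ by repeatedly zeroing any coordinate $j$ for which the $i{+}1$ servers with profiles $p_1,\dots,p_{i+1}$ already hold at least $B$ bits of version $j$, stopping once no version is decodable from this set. Because the profiles are nested, if the process survived $c$ steps some version would be common to $c$ servers yet undecodable, contradicting the retrieval rule; hence it halts with $m\le c-1$ profiles. For each version $u$, dropping the single server added at the step where coordinate $u$ was zeroed shows the $m$ servers together hold at least $B-t$ bits of $u$, so $mt\ge v(B-t)$ and $t\ge \frac{v}{c+v-1}B$. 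Crucially, which coordinates are zeroed depends on the code, so the adversarial profiles are \emph{not} initial segments in general.

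Your scheme pins the laggards to initial-segment profiles and then tries to aggregate the resulting per-level constraints. Even granting your escape-blocking step, the summation you outline does not reach $\frac{v}{c+v-1}$. Take $v=3$, $c=4$ (target $\tfrac12$), assume $\alpha<\tfrac12$, and use $m=2$ complete servers with two laggards per level as your recipe dictates. The constraints give $\sum_i x_i^{(3)}\ge B$ and, after averaging over the six $2$-subsets, $\sum_i x_i^{(j)}\ge 2(1-2\alpha)B$ for $j=1,2$; combining with $\sum_i\sum_k x_i^{(k)}\le 4\alpha B$ yields only $\alpha\ge \tfrac{5}{12}$, which does not contradict $\alpha<\tfrac12$. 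Varying the laggard count or invoking the ``charge for the MDS escape'' case does not close this gap in any way you have specified. The slack is exactly the freedom you flagged: the code can place its later-version mass on the complete servers in a pattern that initial-segment laggards cannot expose, whereas the paper's adaptive zeroing attacks precisely whichever versions the code happens to over-provision. The missing idea is therefore not a case split over laggard counts but letting the adversarial profiles depend on the code.
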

\begin{proof}
  %Here is the proof sketch using non-technical words.\\
  The set of versions reached by a server is called the \emph{profile} of the server.
  To prove the proposition, we construct $m$ profiles, iteratively.
  Then, we consider a set of $m$ servers each with one of those profiles, and argue on the minimum amount of information
  those servers should have, collectively.
  In the following, we represent a profile with a binary vector of size $v$, where a ``1'' in coordinate $i$, $1\leq i \leq v$ implies reception 
  of version $i$.  
  Note that a server with a ``1'' in coordination $i$ in its profile has not necessarily stored any information about version $i$.
%  Note that a ``1'' in coordination $i$ only indicate that the server has received version $i$.
%  It does not imply that the server has stored information about version $i$. 
%  The amount of information of version $i$ stored by a server can be as low as zero, even when the server receives version $i$.
   A ``0'' in coordinate $i$, however, indicates that version $i$ has not been received, therefore the server will have no information about version $i$.
   
  The construction of profiles is performed iteratively starting with  profile $p_1=(1,1,1, \ldots, 1)$, 
  that is the profile of a server that has received all the versions.
  Let $p_i$ be the profile constructed in the $i$th iteration.
  To construct $p_{i+1}$, we initially set $p_{i+1}$ to $p_i$.
  If the set of $i+1$ servers with profiles $p_1, \ldots, p_{i}, p_{i+1}$ have at least $B$ bits of information about
  a version $j$, then we set he coordinate $j$ in vector $p_{i+1}$ to zero.
  We repeat this process of nullifying coordinates until the set of $i+1$ servers with profiles $p_1\ldots, p_{i+1}$ do not have enough information 
  (that is $B$ bits of information) about any version.
  We terminate if $p_{i+1}$ is a zero vector, and set $m$ to $i$.
  
  First, we show that $m\leq c-1$.
  By contradiction, assume $m\geq c$.
  Then, there must be a coordinate $j$ which is equal to one in all the profiles $p_1, p_2, \ldots, p_m$.
  This is a contradiction, since, in that case, the set of $c$ servers with profiles $p_1, p_2, \ldots, p_c$
  have at least one common version, hence they can collectively decode at least one version 
  (that is, they must have enough information about at least one version).
  
  Next we show that, for any version $u$, the set of $m$ servers with profiles $p_1, p_2, \ldots, p_m$
  collectively have at least $B-t$ bits of information, where $t$ is the maximum storage cost per server.
  Fix any version $u$.
  Let $1\leq j\leq m$ be the first iteration in the profile construction process where the coordinate corresponding to version $u$ is set to zero.
  This implies that there is a profile $p$ such that the set of $j$ servers with profiles $p_1, \ldots, p_{j-1}, p$ have enough information about version $u$.
  Note that the maximum amount of information per server for version $u$ is $t$.
  Therefore, the set of $j-1$ servers with profiles $p_1, \ldots, p_{j-1}$ must collectively have at least $B-t$ bits of information about version $u$.
  Since this holds for any version, the servers with profiles $p_1, \ldots, p_m$ must have at least $v(B-t)$ bits of information
  about all $v$ versions.
  The maximum storage cost per server is $t$.
  Therefore, we must have $\frac{v(B-t)}{m}\leq t$, thus $\frac{v(B-t)}{c-1}\leq t$, hence $t\geq \frac{v}{c+v-1}B$.

%  To prove the lower bound, we construct $p$, $p\leq c+v-1$ profiles and prove than any set of $p=c+v-1$ servers with those profiles 
%  must collectively have information about all the $v$ versions.
%  In the following, we represent a profile with a binary vector of size $v$, in which a one in location $i$, $1\leq i \leq v$ implies reception 
%  of version $i$.
  
\end{proof}

 Suppose each server only stores information about the latest version received.
 Without loss of generality, suppose $B=1$.
 Assume that the amount of storage assigned to the latest version is $\frac{1}{\lceil\frac{c}{v} \rceil}$.
 Consider a set of $c=vq+d$ servers, where $q$ is a non-negative integer and $1\leq d\leq v-1$.
 Assume that each server has received at least one version. 
 This this is a more general assumption compared to the problem's assumption, 
 which only considers the set of $c$ servers that have at least one common version.
 Since each server has received at least one version, there must be at least $q+1$ servers with identical latest versions.
 Each of those servers has assigned $\frac{1}{\lceil\frac{c}{v} \rceil}$ storage to their latest version.
 Therefore, the total amount of storage assigned to that version is
 \[
   (q+1)\frac{1}{\lceil\frac{c}{v} \rceil}=1
 \]
 When $d=1$, that is when $c=vq+1$, we get
 \[
   \frac{1}{\lceil\frac{c}{v} \rceil}=\frac{1}{\frac{c+(v-1)}{v}}=\frac{v}{c+v-1}.
 \]
 For instance, when $c=v+1$, it is possible to get the optimal storage cost of $\frac{1}{2}$, which is almost $50\%$ lower than
 the minimum storage cost achievable in the original multi-version coding problem.
 
 We remark that, under the general assumption mentioned above, the storage cost of $\frac{1}{\lceil\frac{c}{v} \rceil}$ is
 optimal.
 The reason is as follows:
 Consider $v$ groups of servers, each group with $\lceil\frac{c}{v} \rceil$ servers in it.
 Note that the total number of servers in all groups is at least $c$.
 Suppose every server in group $i$, $1\leq i\leq v$ has received only version $i$.
 If the storage cost per server is less than $\frac{1}{\lceil\frac{c}{v} \rceil}$, for any version $i$, the total information about version $i$ stored 
 by servers in all the $v$ groups will be less than one.
 In this case, no version can be decoded by the above set of $v\cdot \lceil\frac{c}{v} \rceil\geq c$ servers.

\section{conclusion}
  Based on the first lower bound derived, the simple replication strategy is optimal if the number of versions is more than $c$.
  For smaller number of version, there is a simple strategy based on MDS codes that can almost achieve the lower bound derived.
  Our second lower bound improves the lower bound on the storage cost of the extended multi-version coding problem proposed in~\cite{WangC14b}.
  It is also tight for many values of $v$.

\end{document}